\newtheorem{definition}{Definition}
\newtheorem{proposition}{Proposition}
\newtheorem{lemma}{Lemma}
\DeclareMathAlphabet{\mathbit}{OML}{cmr}{bx}{it}
\DeclareMathAlphabet{\mathsf}{OT1}{cmss}{m}{n}
\DeclareMathAlphabet{\mathTXf}{OT1}{cmss}{bx}{it}
\DeclareMathOperator*{\argmax}{argmax}
\newcommand{\norm}[1]{\lVert{#1}\rVert}
\theoremstyle{remark}
\newtheorem{remark}{Remark} 
\theoremstyle{example}
\theoremstyle{assumption}
\newcommand{\ubar}[1]{\text{\b{$#1$}}}
\definecolor{orange}{rgb}{0.8627,0.4314,0.1961}
\definecolor{grey}{rgb}{0.2196,0.2471,0.3176}
\definecolor{blue}{rgb}{0.0275,0.4431,0.5294}
\definecolor{green}{rgb}{0,0.6078,0.4392}
\newcommand{\Mod}[1]{\ (\mathrm{mod}\ #1)}
\begin{document}

\bstctlcite{IEEEexample:BSTcontrol}
\title{Coordinated Beam Selection in Millimeter Wave MU-MIMO Using Out-of-Band Information}
\author{
     	\IEEEauthorblockN{
     		Flavio Maschietti, 
     		David Gesbert,
     		Paul de Kerret
	}
	\IEEEauthorblockA{
     		Communication Systems Department, EURECOM, Sophia-Antipolis, France\\
    		Email: \{flavio.maschietti, david.gesbert, paul.dekerret\}@eurecom.fr}
    	} 
\maketitle

\begin{abstract} 
	Using out-of-band (OOB) side-information has recently been shown to accelerate beam selection in single-user 
	millimeter wave (mmWave) massive MIMO (m-MIMO) communications. In this paper, we propose a novel OOB-aided 
	beam selection framework for a mmWave uplink multi-user system. In particular, we exploit spatial information extracted 
	from lower (sub-$6$ GHz) bands in order to assist with an inter-user coordination scheme at mmWave bands.
	To enforce coordination, we propose an exchange protocol exploiting device-to-device (D2D) communications.
	In particular, low-rate beam-related information is exchanged between the mobile terminals.
	The decentralized coordination mechanism allows the suppression of the so-called co-beam interference which would 
	otherwise lead to irreducible interference at the base station (BS) side, thereby triggering substantial spectral efficiency (SE) gains. 
\end{abstract}
		
\section{Introduction}
	
	The large bandwidths available at mmWave carrier frequencies are expected to help meet the throughput requirements for future 
	mobile networks~\cite{Heath2016}. In order to guarantee appropriate link margins and coverage in response to stronger path 
	losses~\cite{Akdeniz2014}, m-MIMO antennas are expected to be used at both BS and UE sides (when the form factor allows).
	However, configuring those antennas entails an additional effort. The high cost and power consumption of the radio components impact
	on the UEs and small BSs, thus limiting the practical implementation of a \emph{fully-}digital beamforming architecture~\cite{Heath2016}. 
	As a consequence, mixed analog-digital \emph{(hybrid)} architectures have been proposed~\cite{Alkhateeb2014}, where a low-dimensional 
	digital processor is concatenated with an RF analog beamformer, implemented through phase shifters~\cite{Heath2016}.
	
	Interestingly, most works on such architectures opt to leave aside multi-user interference issues in the analog domain and cope with them in 
	the digital part instead. For instance, in \cite{Alkhateeb2015}, the analog stage is intended to find the best beam directions at each UE regardless 
	of the fact that resulting paths arriving at the BS from different UE might end up in the same receive BS analog beam (so-called \emph{co-beam}
	interference). The strength of this approach lies in the fact that it is possible to use the existing beam training algorithms for single-user links -- 
	such as~\cite{Maschietti2017, Garcia2016, Ali2018} -- in the analog stage. Such algorithms have been developed bearing in mind the need for 
	fast link establishment in mmWave communications~\cite{Giordani2019}. Yet, multiple \emph{closely-located} UEs bear certain risk to share one 
	or more common reflectors, causing the potential alignment of some strong paths' angles of arrival (AoA) at the BS~\cite{Akdeniz2014}. 
	In this case, the application of the Zero-Forcing (ZF) criterion on the resulting effective channel in the digital domain might not be effective.
				
	To solve the irreducible uplink co-beam interference problem, a possible approach consists in addressing the interference before it takes place, 
	i.e. the UE side, as is done e.g. in \cite{Zhu2017}. Although showing significant gains over the existing solutions, such works assume perfect CSI 
	for analog beamforming, which might not be realistic in some mmWave contexts~\cite{Alkhateeb2015}.
	
	To go around this problem, we propose a UE \emph{coordination mechanism} exploiting statistical OOB information. Several prior works 
	have pioneered the idea of exploiting side-information (in particular, extracted from sub-$6$ GHz bands) for mmWave performance 
	optimization~\cite{Maschietti2017, Garcia2016, Ali2018}, but -- to our best knowledge -- not in the multi-user setting. 
	The coordination mechanism is based on the idea of each UE \emph{autonomously} selecting an analog beam for transmission so as to 
	strike a trade-off between (i) capturing enough channel gain and (ii) ensuring the UE signals impinge on distinct beams at the BS side. 
	The intuition behind point (ii) is to ensure that the effective channel matrix seen by the BS preserves full rank properties, 
	thus enabling inter-UE interference mitigation in the digital domain. 
	
	In this paper, further novelty originates from (i) the way OOB-based side-information is exploited to enable a coordination mechanism
	between the UEs, and the fact that (ii) not all the UEs need to be endowed with the same amount of side-information. In particular, 
	our scheme leverages a hierarchical information exchange which allows halving of the overall information overhead compared with a full 
	exchange scenario. In this setup, some higher ranked UEs receive sub-$6$ GHz beam information from lower ranked ones only.
	This configuration can be obtained through e.g. D2D communications. In this respect, the 3GPP Release 16 is expected to support point-to-point
	side-links which facilitate cooperative communications among the UEs with low resource consumption~\cite{3GPP2018d}.
		
\section{System Model and Problem Formulation}

We consider a multi-band scenario, where a conventional wireless network using sub-$6$ GHz bands coexists with a mmWave
one. In the following, we introduce the mmWave model. In line with~\cite{Ali2018}, the sub-$6$ GHz model is likewise defined,
with all variables underlined to distinguish them.
				
	\subsection{Uplink Millimeter Wave Model}
			
		The BS is equipped with $N_{\text{BS}} \gg 1$ antennas to support $K$ UEs with $N_{\text{UE}} \gg 1$ antennas each.
		The UEs are assumed to reside in a disk of radius $r$, which will be used to control inter-UE average distance.
		To ease the notation, we assume that the BS has $K$ RF chains available (one for each UE), connected to all the 
		$N_{\text{BS}}$ antennas (fully-connected\footnote{Although \emph{partially-connected} architectures are more 
		relevant for practical implementation due to less stringent hardware requirements~\cite{Park2017}, we assume 
		\emph{fully-connected} architectures to keep notation light, as in most prior works focusing on mmWave SE maximization, 
		e.g.~\cite{Alkhateeb2014, Alkhateeb2015, Zhu2017}. The beam selection strategies which we will propose in Section 
		\ref{sec:OOB_BS} are in principle extendible to all mixed analog/digital beamforming architectures.}hybrid 
		architecture~\cite{Heath2016}).
		
		The $u$-th UE precodes the data $x^{u} \sim \mathcal{CN}(0, 1)$ with the analog unit norm vector
		$\mathbf{v}^{u} \in \mathbb{C}^{N_{\text{UE}} \times 1}$. We assume that the UEs have one RF chain each, 
		i.e. UEs are limited to analog beamforming via phase shifters (constant-magnitude elements)~\cite{Alkhateeb2014}.
		In addition, $\mathbb{E}[\norm{\mathbf{v}^u x^u}^2] \le 1$, assuming normalized power constraints.
		The reconstructed signal after mixed analog/digital combining at the BS is expressed as follows 
		-- assuming no timing and carrier mismatches:
		\begin{equation} \label{RX_Signal_pAnalogComb}
			\mathbf{\hat{x}} = \mathbf{W}_{\text{D}} \sum_{u=1}^K \mathbf{W}_{\text{RF}}^{\textrm{H}} 
			\mathbf{H}^{u} \mathbf{v}^{u} x^{u} + \mathbf{W}_{\text{D}} \mathbf{W}_{\text{RF}}^{\textrm{H}} \mathbf{n}
		\end{equation}
		where $\mathbf{H}^{u} \in \mathbb{C}^{N_{\text{BS}} \times N_{\text{UE}}}$ is the channel matrix from the $u$-th UE 
		to the BS, $\mathbf{n} \sim \mathcal{CN}(\mathbf{0}, \sigma_\mathbf{n}^2 \mathbf{I})$ is the thermal noise vector,
		$\mathbf{W}_{\text{RF}} \in \mathbb{C}^{N_{\text{BS}} \times K}$ contains the beamformers relative to
		each RF chain (subject to the same hardware constraints as described above),
		and $\mathbf{W}_{\text{D}} \in \mathbb{C}^{K \times K}$ denotes the digital combining matrix.
		
		Introducing the effective channel $\mathbf{h}_{\text{e}}^u = \mathbf{W}_{\text{RF}}^{\textrm{H}} 
		\mathbf{H}^u \mathbf{v}^u \in \mathbb{C}^{K \times 1}$ of the $u$-th UE, we can write
		\eqref{RX_Signal_pAnalogComb} as follows:
		\begin{equation}
			\mathbf{\hat{x}} = 
			\mathbf{W}_{\text{D}} \sum_{u=1}^K \mathbf{h}_{\text{e}}^u x^u + \mathbf{W}_{\text{D}} \mathbf{\tilde{n}} =
			\mathbf{W}_{\text{D}} \mathbf{H}_{\text{e}} \mathbf{x} + \mathbf{W}_{\text{D}} \mathbf{\tilde{n}}
		\end{equation}
		where $\mathbf{H}_{\text{e}} \in \mathbb{C}^{K \times K}$ denotes the effective channel matrix -- 
		containing all the single effective channels -- and where $\mathbf{\tilde{n}} = 
		\mathbf{W}_{\text{RF}}^{\textrm{H}} \mathbf{n}$ denotes the filtered thermal noise vector. \vspace{-0.21cm}
		
		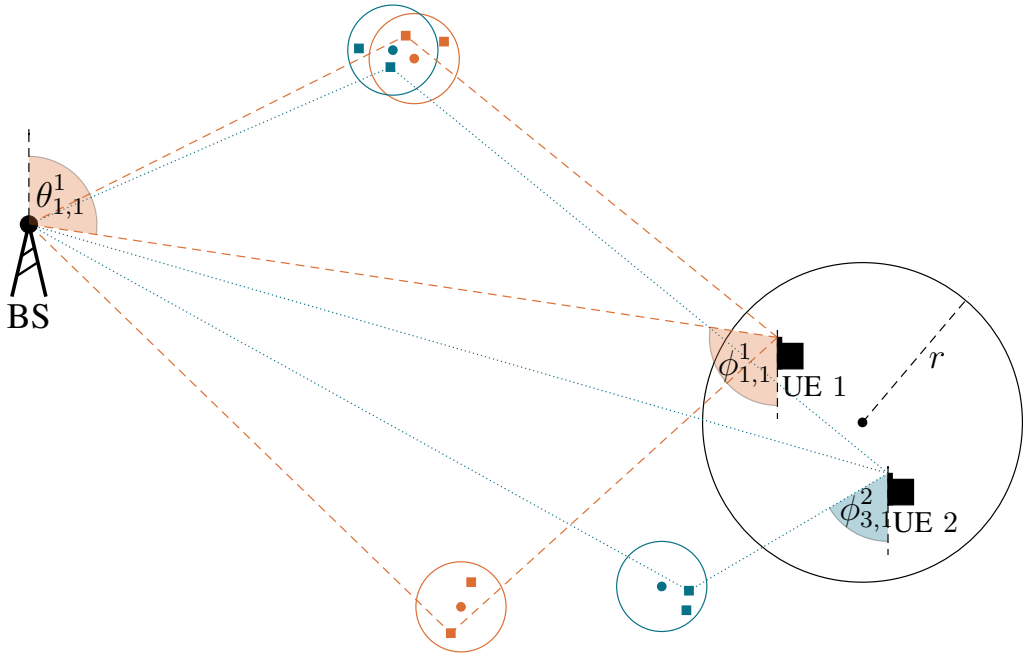
\begin{figure}[h]
			\centering
			\resizebox{14cm}{!}{
			\begin{tikzpicture}
				\draw (5,-1.18) circle (1.88cm);
				\filldraw[black] (5,-1.18) circle (.05cm);
				\draw[densely dashed] (5,-1.18) -- (6.25, .3);
				\draw (5.88,-.45) node {$r$};
				\filldraw[black] (4,-.55) rectangle +(.3,.3);
				\filldraw[black] (4,-.55) rectangle +(.05,.37);
				\draw (4.44,-.76) node {\small{UE $1$}};
				\filldraw[black] (5.3,-2.15) rectangle +(.3,.3);
				\filldraw[black] (5.3,-2.15) rectangle +(.05,.37);
				\draw (5.74,-2.36) node {\small{UE $2$}};
				\draw[very thick] (-5,0.3) -- (-4.8,1.15);
				\draw[very thick] (-4.8,1.15) -- (-4.6,.3);
				\draw[thick] (-4.88,.78) -- (-4.73,.88);
				\draw[thick] (-4.93,.55) -- (-4.7,.7);
				\draw (-4.8, .07) node {BS};
				\draw[orange] (-.27,3.1) circle (.53cm);
				\filldraw[orange] (-.27, 3.1) circle (.05cm);
				\filldraw[orange] (.03, 3.25) rectangle +(.1, .1);
				\filldraw[orange] (-.42, 3.32) rectangle +(.1, .1);
				\draw[blue] (-.521,3.2) circle (.53cm);
				\filldraw[blue] (-.521, 3.2) circle (.05cm);
				\filldraw[blue] (-.97, 3.17) rectangle +(.1, .1);
				\filldraw[blue] (-.6, 2.95) rectangle +(.1, .1);
				\draw[orange] (.28,-3.35) circle (.53cm);
				\filldraw[orange] (.28, -3.35) circle (.05cm);
				\filldraw[orange] (.11, -3.71) rectangle +(.1, .1);
				\filldraw[orange] (.35, -3.11) rectangle +(.1, .1);
				\draw[blue] (2.64,-3.11) circle (.53cm);
				\filldraw[blue] (2.64, -3.11) circle (.05cm);
				\filldraw[blue] (2.91, -3.21) rectangle +(.1, .1);
				\filldraw[blue] (2.88, -3.44) rectangle +(.1, .1);
				\draw[orange, densely dashed] (4,-.18) -- (-.37,3.37);
				\draw[orange, densely dashed] (-4.8, 1.15) -- (-.37,3.37);
				\draw[orange, densely dashed] (4,-.18) -- (.16,-3.66);
				\draw[orange, densely dashed] (-4.8, 1.15) -- (.16,-3.66);
				\draw[orange, densely dashed] (4,-.18) -- (-4.8,1.15);
				\draw[blue, densely dotted] (5.3,-1.78) -- (-.55,3);
				\draw[blue, densely dotted] (-4.8, 1.15) -- (-.55,3);
				\draw[blue, densely dotted] (5.3,-1.78) -- (2.96,-3.16);
				\draw[blue, densely dotted] (-4.8, 1.15) -- (2.96,-3.16);
				\draw[blue, densely dotted] (5.3,-1.78) -- (-4.8,1.15);
				\filldraw[black] (-4.8,1.15) circle (.1cm); 
				\coordinate (C) at (4,-.18);
				\coordinate (B) at (-4.8,1.15);
				\coordinate (A) at (-4.8,3.35);
				\tkzMarkAngle[fill=orange,size=0.8cm,opacity=.3](C,B,A)
				\tkzLabelAngle[pos = 0.5](A,B,C){$\theta_{1,1}^1$}
				\coordinate (C) at (4,-.18);
				\coordinate (B) at (-4.8,1.15);
				\coordinate (A) at (4,-3.35);
				\tkzMarkAngle[fill=orange,size=0.8cm,opacity=.3](B,C,A)
				\tkzLabelAngle[pos = -0.5](B,C,A){$\phi_{1,1}^1$}
				\coordinate (C) at (5.3,-1.78);
				\coordinate (B) at (2.96,-3.16);
				\coordinate (A) at (5.3,-3.35);
				\tkzMarkAngle[fill=blue,size=0.8cm,opacity=.3](B,C,A)
				\tkzLabelAngle[pos = 0.5](B,C,A){$\phi_{3,1}^2$}
				\draw[densely dashed](-4.8, 1.24) to[out=90,in=90] (-4.8,2.2);
				\draw[densely dashed](4, -.18) to[out=90,in=90] (4, -1.14);
				\draw[densely dashed](5.3, -1.78) to[out=90,in=90] (5.3, -2.74);
			\end{tikzpicture}
			}
			\caption{Co-beam interference example with $C = 3$ clusters, $L = 2$ paths, and $K = 2$ UEs.
			The UEs are assumed to reside in a disk of radius $r$. In this illustration, two \emph{closely-located} UEs share some 
			reflectors and the signal waves reflecting on the top ones arrive quasi-aligned at the BS -- i.e. captured with the same BS beam -- 
			while originating from distinct UEs.}
			\label{fig:Scen}
		\end{figure}

	\subsection{Channel Model}

		Assuming mmWave channels exhibit limited scattering~\cite{Akdeniz2014}, we adopt a geometric narrowband channel
		model with $C$ clusters, each one contributing to $L$ paths. 
		The channel matrix $\mathbf{H}^{u} \in \mathbb{C}^{N_{\textrm{BS}} \times N_{\textrm{UE}}}$ for the $u$-th UE is 
		thus expressed as follows~\cite{Ali2018}:
		\begin{equation} \label{H}
			\mathbf{H}^{u} \triangleq \sqrt{N_{\textrm{BS}} N_{\textrm{UE}}} 
			\left(\sum_{c=1\vphantom{\ell}}^C \sum_{\ell=1}^L \alpha_{c, \ell}^{u}
			\mathbf{a}_{\textrm{BS}}(\theta_{c,\ell}^{u})\mathbf{a}^{\textrm{H}}_{\textrm{UE}} (\phi_{c,\ell}^{u}) \right)
		\end{equation}
		where $\alpha_{c, \ell}^{u} \sim \mathcal{CN}(0, \sigma_c^2)$ denotes the complex gain for the $\ell$-th path of the 
		$c$-th cluster of the $u$-th UE, including the shaping filter and the large-scale pathloss. 
		The variables $\phi_{c,\ell}^{u} \in [0, 2\pi)$ and $\theta_{c,\ell}^{u} \in [0, 2\pi)$ are the AoD and AoA for the $\ell$-th
		path of the $c$-th cluster connecting the $u$-th UE to the BS. 
		The vectors $\mathbf{a}_{\textrm{UE}}(\cdot) \in \mathbb{C}^{N_{\textrm{UE}} \times 1}$
		and $\mathbf{a}_{\textrm{BS}} (\cdot) \in \mathbb{C}^{N_{\textrm{BS}} \times 1}$ 
		denote the antenna \emph{unitary} steering vectors at the $u$-th UE and the BS, respectively.
		We assume \emph{uniform linear arrays} (ULA) with $\lambda/2$ inter-element spacing.
		
\subsection{Analog Codebooks}

	We define the codebooks used for analog beamforming as
	\begin{equation}
		\mathcal{V} \triangleq \{ \mathbf{v}_1, \dots, \mathbf{v}_{M_{\text{UE}}} \}, \qquad
		\mathcal{W} \triangleq \{ \mathbf{w}_1, \dots, \mathbf{w}_{M_{\text{BS}}} \}
	\end{equation}
	where $M_{\text{UE}} = N_{\text{UE}}$ and $M_{\text{BS}} = N_{\text{BS}}$ denote the number of elements (beamforming
	vectors) in the codebooks, and where $\mathcal{V}$ is assumed to be shared between all the UEs, to ease the notation.
	
	For instance, with ULA, a suitable design for the fixed elements in the codebook consists in 
	selecting steering vectors over a discrete grid of angles, as follows~\cite{Alkhateeb2015}:
	\begin{equation} \label{AnBeamform1}
		\mathbf{v}_n = \mathbf{a}_{\textrm{UE}}(\hat{\phi}_n), \quad n \in \llbracket 1, M_{\text{UE}} \rrbracket
	\end{equation}
	\begin{equation}  \label{AnBeamform2}
		\mathbf{w}_m = \mathbf{a}_{\text{BS}}(\hat{\theta}_m), \quad m \in \llbracket 1, M_{\text{BS}}\rrbracket
	\end{equation}
	where the quantized angles $\hat{\phi}_n$ and $\hat{\theta}_m$ can be chosen according to different sampling strategies 
	of the $[0, \pi]$ range~\cite{Maschietti2017}.
	\begin{remark}
		The notation $\llbracket 1, M \rrbracket$ denotes the set $\{ 1, \dots, M \}$. 
		The same notation will be used in the remainder of the paper. \qed
	\end{remark}

\subsection{Problem Formulation} \label{sec:OOB_BS}

	The beam selection problem in mmWave communications consists in selecting the analog transmit and receive beams from 
	$\mathcal{V}$ and $\mathcal{W}$ to maximize the sum-rate defined as follows:
	\begin{equation} \label{Sum_Rate}
		R(\mathbf{n}, \mathbf{m}) \triangleq \sum_{u=1}^K \log_2 \big( 1+\gamma^u(\mathbf{n}, \mathbf{m}) \big)
	\end{equation}
	where $\mathbf{n} \triangleq \big[n_1 \quad \dots \quad n_K\big]$ 
	(resp. $\mathbf{m} \triangleq \big[m_1 \quad \dots \quad m_K	\big]$) is the vector containing the selected
	beams at the UE  (resp. BS side), while $\gamma^u$ is the received SINR for the $u$-th UE, defined as~\cite{Alkhateeb2015}
	\begin{equation} \label{SINR}
		\gamma^{u}(\mathbf{n}, \mathbf{m}) \triangleq		\frac{|\mathbf{w}_{\text{D}}^u \mathbf{h}_{\text{e}}^u|^2}
		{\sum_{w \ne u}|\mathbf{w}_{\text{D}}^u \mathbf{h}_{\text{e}}^{w}|^2 + 
		\norm{\mathbf{w}_{\text{D}}^u}^2 \sigma_{\mathbf{\tilde{n}}}^2}
	\end{equation}
	with $\mathbf{w}_{\text{D}}^{u} \in \mathbb{C}^{1 \times K}$ denoting the $u$-th row of $\mathbf{W}_{\text{D}}$.
		
	In order to maximize \eqref{Sum_Rate}, the mutual optimization of both analog and digital components must be considered. 
	A common viable approach consists in decoupling the design, as the analog precoder can be optimized through long-term
	statistical information, whereas the digital one can be made dependent on instantaneous one~\cite{Alkhateeb2015}. 
	The same approach is followed here. 
	
	In particular, we consider ZF combining, so that we have
	\begin{equation}
		\mathbf{W}_{\text{D}} = 
		\big(\mathbf{H}_{\text{e}}^{\textrm{H}} \mathbf{H}_{\text{e}} \big)^{-1} \mathbf{H}_{\text{e}}^{\textrm{H}}.
	\end{equation}
	The received SINR for the $u$-th UE is then simplified as
	\begin{equation} \label{SINRZF}
		\gamma^u(\mathbf{n}, \mathbf{m}) = 
		\frac{1}{\sigma_{\mathbf{\tilde{n}}}^2 \{\big(\mathbf{H}_{\text{e}}^{\textrm{H}} \mathbf{H}_{\text{e}}\big)^{-1}\}_{u, u}},
	\end{equation}
	with $\{\cdot\}_{u, u}$ denoting the $u$-th element on the diagonal of 
	$(\mathbf{H}_{\text{e}}^{\textrm{H}} \mathbf{H}_{\text{e}})^{-1}$, associated to the $u$-th UE.
	
	In general, the \emph{perfect} knowledge of the effective channels plus a \emph{centralized} operator to instruct the UEs 
	are needed to maximize \eqref{Sum_Rate} via \eqref{SINRZF}. Such information is not available without a significant 
	resource overhead. In the next section, we propose some strategies to exploit sub-$6$ GHz 
	information for a distributed and low-overhead approach to the problem.
	
\section{Out-of-Band-Aided Beam Selection} \label{sec:OOB_BS}
	
	Let us consider the existence of a sub-$6$ GHz channel $\ubar{\mathbf{H}}^u \in \mathbb{C}^{\ubar{N}_{\textrm{BS}}
	\times \ubar{N}_{\textrm{UE}}}$ between the $u$-th UE and the BS. We assume that each UE is able to compute a
	\emph{spatial spectrum} $\mathbb{E}[|\ubar{\mathbf{S}}^u|^2] \in \mathbb{C}^{\ubar{M}_{\textrm{BS}} \times 
	\ubar{M}_{\textrm{UE}}}$ of the sub-$6$ GHz channel, where~\cite{Ali2018}
	\begin{equation}
		\ubar{\mathbf{S}}^u = \ubar{\mathbf{W}}^{\mathrm{H}} \ubar{\mathbf{H}}^u \ubar{\mathbf{V}}
	\end{equation}
	and where the expectation is over fast fading.
	The matrices $\ubar{\mathbf{W}} \in \mathbb{C}^{\ubar{N}_{\textrm{BS}} \times \ubar{M}_{\textrm{BS}}}$ and 
	$\ubar{\mathbf{V}} \in \mathbb{C}^{\ubar{N}_{\textrm{UE}} \times \ubar{M}_{\textrm{UE}}}$ collect all the sub-$6$ GHz
	steering vectors at the BS and UE sides, sampled at the same angles as the mmWave ones. In particular, we assume 
	$\ubar{N}_{\textrm{BS}} \ll \ubar{M}_{\textrm{BS}} = M_{\textrm{BS}}$ and 
	$\ubar{N}_{\textrm{UE}} \ll \ubar{M}_{\textrm{UE}} = M_{\textrm{UE}}$. 
	The $(\ubar{m}, \ubar{n})$-th element of $\mathbb{E}[|\ubar{\mathbf{S}}^u|^2]$ contains thus the sub-$6$ GHz channel 
	gain obtained with the $\ubar{n}$-th beam at the $u$-th UE and the $\ubar{m}$-th one at the BS.
	\begin{remark}
		The computation of $\mathbb{E}[|\ubar{\mathbf{S}}^u|^2]$ is \emph{merely} bound to the knowledge of the average 
		sub-$6$ GHz channel, as $\ubar{\mathbf{W}}$ and $\ubar{\mathbf{V}}$ are predefined fixed matrices. Note that the 
		acquisition of the CSI matrix for conventional sub-$6$ GHz communications is a standard operation~\cite{3GPP2018c}.
		In this respect, sub-$6$ GHz channel measurements can be collected and stored \emph{periodically} -- e.g. within the
		channel coherence time -- to be \emph{readily} available for evaluating $\mathbb{E}[|\ubar{\mathbf{S}}^u|^2]$.
		In other words, obtaining the spatial spectrum $\mathbb{E}[|\ubar{\mathbf{S}}^u|^2]$ requires no additional 
		training overhead~\cite{Ali2018}. \qed
	\end{remark}
			
	\subsection{Exploiting Sub-6 GHz Information}	
	
		The available sub-$6$ GHz spatial information can be exploited to obtain a rough estimate of the angular
		characteristics of the mmWave channel. Indeed, due to the larger beamwidth of sub-$6$ GHz beams, one sub-$6$ GHz 
		beam can be associated to a set of mmWave beams, as defined below.		
		
		\begin{definition}
			For a given sub-$6$ GHz beam pair $(\ubar{n}, \ubar{m})$, we introduce the set $\mathcal{S}(\ubar{n}, \ubar{m}) 
			\triangleq \mathcal{S}_{\textnormal{UE}}(\ubar{n}) \times \mathcal{S}_{\textnormal{BS}}(\ubar{m})$ where 
			$\mathcal{S}_{\textnormal{UE}}(\ubar{n})$  (resp. $\mathcal{S}_{\textnormal{BS}}(\ubar{m})$) contains all the 
			mmWave beams belonging to the $3$-dB beamwidth of the $\ubar{n}$-th (resp. $\ubar{m}$-th) sub-$6$ GHz beam.
		\end{definition}
		
		It is important to remark that we focus in this work on the selection of sub-$6$ GHz beams to further refine. 
		We indeed adhere to the well-known two-stage beamforming and training operation, where \emph{fine-grained} training 
		(called beam refinement) follows \emph{coarse-grained} training (called sector sweeping). 
		In our approach, coarse-grained beam selection is achieved without \emph{actually} training the beams with reference signals, 
		but using instead beam information extracted from lower channels, so as to speed up the process. 
		Once these coarse sub-$6$ GHz beams are chosen, the small subset of associated mmWave beams is trained. 
		We refer to~\cite{Kim2014} for more details on this standard step. In what follows, we propose some multi-user beam selection 
		strategies leveraging the described OOB-related side-information.
			
	\subsection{Uncoordinated Beam Selection}

		We first describe here an approach based on~\cite{Alkhateeb2015}, where the authors proposed to design the analog
		beamformers so as to maximize the received power (SNR) for each UE, neglecting multi-user interference. When OOB 
		information is available, the beam selection $(\ubar{n}_u^{\text{un}} \in \ubar{\mathcal{V}},\ubar{m}_u^{\text{un}} \in
		\ubar{\mathcal{W}})$ at the $u$-th UE -- which we will denote as \emph{uncoordinated} (un) -- can be expressed as follows:
		\begin{align} \label{UN_Alg}
					(\ubar{n}_{u}^{\text{un}},& \ubar{m}_{u}^{\text{un}}) =
					\argmax_{\ubar{n}_u, \ubar{m}_u} ~\log_2\big(1 + \mathbb{E}_{n_u, m_u | \ubar{n}_u, \ubar{m}_u} 
					\big[ \gamma^u_{\text{su}}(n_u, m_u) \big] \big)
		\end{align}
		where we have approximated the rate via Jensen's inequality and we have defined the single-user expected SNR, conditioned
		on a given sub-$6$ GHz beam pair $(\ubar{n}_u, \ubar{m}_u) \in \ubar{\mathcal{V}} \times \ubar{\mathcal{W}}$, as follows:
		\begin{equation} \label{EqSINR_Un_K}
			\mathbb{E}_{n_u, m_u|\ubar{n}_u, \ubar{m}_u} \big[ \gamma^u_{\textnormal{su}}(n_u, m_u) \big] =
			\hspace{-0.5cm} \sum_{(n_u, m_u) \in \mathcal{S}(\ubar{n}_u, \ubar{m}_u)}
			\hspace{-0.1cm}\frac{g_{n_u, m_u}}{{S_u \sigma_{\mathbf{\tilde{n}}}^2}}
		\end{equation}
		with
		\begin{align}
		g_{n_u, m_u} &\triangleq \mathbb{E} \Big[ \big|\mathbf{w}^u_{m_u} \mathbf{H}^u \mathbf{v}^u_{n_u} \big|^2 \Big] \\
		&= \mathbb{E} \Big[ \big|\mathbf{S}^u_{n_u, m_u}\big|^2 \Big]
		\end{align} 
		being the average beamforming gain obtained at the $u$-th UE with the transmit-receive beam pair $(n_u, m_u)$, 
		and where $S_u\triangleq \text{card}(\mathcal{S}(\ubar{n}_u, \ubar{m}_u))$. 
	
		To solve \eqref{EqSINR_Un_K}, the $u$-th UE needs to know the mmWave gain $g_{n_u, m_u}
		~\forall (n_u, m_u) \in \mathcal{S}(\ubar{n}_u, \ubar{m}_u)$. This information is not available but can be replaced for 
		algorithm derivation purposes\footnote{The proposed algorithms are then evaluated in Section \ref{sec:sims} under 
		realistic multi-band channel conditions as proposed in~\cite{Ali2018}, where the described behavior and consequent 
		randomness is taken into account.} with the gain observed in the sub-$6$ GHz channel over the beam pair 
		$(\ubar{n}_u, \ubar{m}_u)$. In other words, we assume
		\begin{equation} \label{Gain_Approx}
			g_{n_u, m_u} \approx \mathbb{E}\left[\big|\ubar{\mathbf{S}}^u_{\ubar{n}_u, \ubar{m}_u}\big|^2 \right]
			\quad \forall (n_u, m_u) \in \mathcal{S}(\ubar{n}_u, \ubar{m}_u).
		\end{equation}
		Note that the average gain information derived from $\ubar{\mathbf{S}}$ will \emph{unlikely} match with
		its mmWave counterpart in absolute terms practice, due to multipath, noise effects and pathloss discrepancies. 
		Still, high correlation has been observed between the temporal and angular characteristics of the LOS path in 
		sub-$6$ GHz and mmWave channels~\cite{Anjinappa2018}. The correlation diminishes as the LOS condition is lost, 
		as small scattering objects participating in the radio propagation emerge at higher frequencies~\cite{Raghavan2018}.
		Nevertheless, it has been shown in~\cite{Ali2016} that, in an outdoor scenario with strong reflectors (buildings), the paths 
		with uncommon AoA at frequencies far apart\footnote{In~\cite{Ali2016}, $5$ carrier frequencies ranging between $900$ MHz and 
		$90$ GHz have been compared.} are less than $10$\% of the overall paths.
		In this respect, \eqref{Gain_Approx} allows to spot a valuable candidate set for mmWave beams in most of the situations. 
		Yet, an important limitation of this approach is that each UE solves its own beam selection problem \emph{independently} 
		of the other UEs, thus ignoring the possible impairments in terms of interference. Therefore, \emph{as the inter-UE average 
		distance decreases}, the performance of this procedure degrades since the UEs have much more chance to share their 
		best propagation paths -- which results in co-beam interference at the BS.
	
	\subsection{Hierarchical Coordinated Beam Selection}
	
		In order to achieve coordination, we propose to use a hierarchical information structure requiring small 
		overhead. In particular, an \emph{(arbitrary)} order among the UEs is established\footnote{The hierarchical information
		exchange is proposed here to facilitate the coordination mechanism at reduced overhead. In this paper, we shall leave 
		aside further analysis on how such a \emph{hierarchy} is defined and maintained.}, for which the $u$-th UE has access to 
		the beam decisions carried out at the (lower-ranked) UEs $1, \dots, u-1$. This configuration is obtainable through e.g. 
		dedicated D2D channels in the lower bands\footnote{D2D communications allows to exchange information among 
		\emph{closely-located} UEs with low resource (power, time, etc.) consumption~\cite{Chen2017}. In particular, the power 
		consumption for exchanging low-rate beam information over D2D side-links could be negligible due to the small relative 
		path loss as compared to communicating to the BS.}. 
		We further assume that such exchanged beam information is \emph{perfectly} decoded at the intended UEs.
		
		Since the UEs exchange beam indexes (in the order of few bits), the communication overhead is kept low.
		Moreover, the so-called \emph{beam coherence time} -- which depends on beam width and UE speed among others -- has been
		reported to be much longer than the channel coherence time~\cite{Va2017}. As a consequence, such overhead is only generated
		at long intervals.
		\begin{remark} 
			Exchanging sub-$6$ GHz beams rather than mmWave ones introduces some \emph{uncertainty}, but allows to 
			save time as no UE has to wait for another one to perform beam training. \qed
		\end{remark}
				
		Assuming that the sub-$6$ GHz beam indices $\ubar{m}_{{1}, \dots, u-1}$ have been received, the
		\emph{coordinated} (co) sub-$6$ GHz beam pair $(\ubar{n}_u^{\text{co}} \in \ubar{\mathcal{V}},\ubar{m}_u^{\text{co}}
		\in \ubar{\mathcal{W}})$ at the $u$-th UE is obtained through
			\begin{align} \label{Coo_Alg}
				(\ubar{n}&_{u}^{\text{co}}, \ubar{m}_{u}^{\text{co}}) =
				\argmax_{\ubar{n}_u, \ubar{m}_u} 
				~\log_2\left(1 + \mathbb{E}_{\mathbf{n}, \mathbf{m}|\ubar{n}_u,\ubar{m}_{1, \dots, u+1}}
				\big[ \gamma^u(\mathbf{n}, \mathbf{m}) \big] \right).
			\end{align}
		Solving \eqref{Coo_Alg} is not trivial, being a subset selection problem for which a Monte-Carlo approach to approximate the 
		expectation with a discrete summation leads to unpractical computational time. Interestingly, for large $N_{\textnormal{BS}}$ 
		and $N_{\textnormal{UE}}$, we are able to derive an approximation for the expectation in \eqref{Coo_Alg} which will be
		useful for algorithm derivation.  We start with showing such intermediate result. \vspace{-0.21cm}
		
		\begin{proposition} \label{PropSINR_Bin_Expected}
			In the limit of large $N_{\textnormal{BS}}$ and $N_{\textnormal{UE}}$, the expected SINR (averaged over small-scale
			fading) of the $u$-th UE obtained after ZF combining at the BS is
			\begin{equation} \label{SINR_Bin_Expected}
			\mathbb{E} \big[ \gamma^u(\mathbf{n},\mathbf{m}) \big] =\begin{cases}
			\dfrac{g_{n_u, m_u}}{\sigma_{\mathbf{\tilde{n}}}^2} & 
			\textnormal{if} ~m_u \neq m_w~\forall w \in \mathcal{K} \backslash \{ u \} \\
			0 & \textnormal{if} ~\exists ~w \in \mathcal{K} \backslash \{ u \} : m_w = m_u
			\end{cases}
			\end{equation}
			where we have defined $\mathcal{K}\triangleq \llbracket 1,K\rrbracket$.
		\end{proposition}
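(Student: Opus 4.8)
The plan is to analyze the effective channel matrix $\mathbf{H}_{\text{e}}$ in the large-array regime and show that, asymptotically, its columns become orthogonal unless two UEs select the same receive beam, in which case two columns coincide (up to scaling) and the Gram matrix becomes singular. First I would write out the $(i,u)$-th entry of $\mathbf{H}_{\text{e}}$ explicitly using the channel model \eqref{H} and the codebook definitions \eqref{AnBeamform1}--\eqref{AnBeamform2}: the effective channel component seen on RF chain $i$ (using BS beam $\mathbf{w}_{m_i} = \mathbf{a}_{\text{BS}}(\hat\theta_{m_i})$) from UE $u$ (using transmit beam $\mathbf{v}_{n_u} = \mathbf{a}_{\text{UE}}(\hat\phi_{n_u})$) is $\{\mathbf{H}_{\text{e}}\}_{i,u} = \mathbf{w}_{m_i}^{\text{H}}\mathbf{H}^u\mathbf{v}_{n_u} = \sqrt{N_{\text{BS}}N_{\text{UE}}}\sum_{c,\ell}\alpha_{c,\ell}^u\,\big(\mathbf{a}_{\text{BS}}(\hat\theta_{m_i})^{\text{H}}\mathbf{a}_{\text{BS}}(\theta_{c,\ell}^u)\big)\big(\mathbf{a}_{\text{UE}}(\phi_{c,\ell}^u)^{\text{H}}\mathbf{a}_{\text{UE}}(\hat\phi_{n_u})\big)$.

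The key asymptotic fact I would invoke is the standard ULA inner-product estimate: for unit steering vectors with $\lambda/2$ spacing, $\mathbf{a}_N(\psi_1)^{\text{H}}\mathbf{a}_N(\psi_2) \to \mathbb{1}[\psi_1 = \psi_2]$ as $N\to\infty$ (more precisely, the Fej\'er-kernel / Dirichlet-kernel behavior makes it vanish as $O(1/N)$ for any fixed angular separation, and equal $1$ on exact coincidence). Applying this at the BS side: as $N_{\text{BS}}\to\infty$, the BS beam $\hat\theta_{m_i}$ only "catches" paths whose AoA equals $\hat\theta_{m_i}$. Hence the off-diagonal entries $\{\mathbf{H}_{\text{e}}\}_{i,u}$ for $i \neq u$ (meaning RF chain $i$ is assigned to a different UE's beam) vanish \emph{provided} $m_i \neq m_u$, because then the AoAs steering into beam $m_i$ are (generically, under the geometric model) distinct from those steering into beam $m_u$. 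This makes $\mathbf{H}_{\text{e}}$ asymptotically diagonal, so $(\mathbf{H}_{\text{e}}^{\text{H}}\mathbf{H}_{\text{e}})^{-1}$ is diagonal with $u$-th entry $1/|\{\mathbf{H}_{\text{e}}\}_{u,u}|^2$. Plugging into \eqref{SINRZF} gives $\gamma^u = |\{\mathbf{H}_{\text{e}}\}_{u,u}|^2/\sigma_{\tilde{\mathbf{n}}}^2$, and taking the expectation over the $\alpha_{c,\ell}^u$ — which are zero-mean circularly symmetric, so the cross terms between distinct paths vanish in expectation — yields $\mathbb{E}|\{\mathbf{H}_{\text{e}}\}_{u,u}|^2 = \mathbb{E}|\mathbf{S}^u_{n_u,m_u}|^2 = g_{n_u,m_u}$, which is the first case of \eqref{SINR_Bin_Expected}. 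For the collision case, if $m_w = m_u$ for some $w\neq u$, then columns $u$ and $w$ of $\mathbf{H}_{\text{e}}$ are built from the \emph{same} BS beam, and in the large-$N_{\text{UE}}$ limit each also collapses onto a single dominant direction; once the two columns become (asymptotically) parallel, $\mathbf{H}_{\text{e}}$ is rank-deficient, $(\mathbf{H}_{\text{e}}^{\text{H}}\mathbf{H}_{\text{e}})^{-1}$ blows up, and ZF drives $\gamma^u \to 0$.

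The main obstacle I expect is making the collision case rigorous: when $m_u = m_w$, the two effective-channel columns are $\mathbf{w}_{m_u}^{\text{H}}\mathbf{H}^u\mathbf{v}_{n_u}$ and $\mathbf{w}_{m_u}^{\text{H}}\mathbf{H}^w\mathbf{v}_{n_w}$, which involve \emph{independent} channel realizations $\mathbf{H}^u, \mathbf{H}^w$, so they are not literally proportional for finite arrays — one must argue that after the large-$N_{\text{BS}}$ limit both reduce to scalar multiples of the \emph{same} surviving geometric component (the path whose AoA coincides with $\hat\theta_{m_u}$), so the $K\times K$ effective matrix loses rank and the ZF pseudo-inverse has an unbounded $(u,u)$ diagonal entry; handling the case of multiple paths landing in the same BS beamwidth, and the precise order of limits (fixing which angles are "distinct" under the continuous geometric model is a probability-one statement), is the delicate part. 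A secondary subtlety is interchanging the large-array limit with the expectation over fading, which I would justify by noting the summands are bounded and dominated, appealing to the $O(1/N)$ decay of the off-diagonal Dirichlet-kernel terms.
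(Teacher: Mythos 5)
Your overall strategy is the same as the paper's: both arguments rest on asymptotic orthogonality of ULA steering vectors, conclude that the $K\times K$ effective matrix behaves as a well-conditioned (essentially diagonal) matrix when the BS receive beams are distinct and loses rank when two UEs collide on the same BS beam, and then average over the path gains to obtain $g_{n_u,m_u}$. Your collision-case reasoning (both effective columns collapse onto the same surviving component, the Gram inverse blows up, and ZF drives $\gamma^u$ to zero) is in substance what the paper does, except that the paper packages it through the Schur-complement identity $\gamma^u = \frac{1}{\sigma_{\mathbf{\tilde n}}^2}\big[\|\mathbf{h}_{\mathrm e}^u\|^2 - (\mathbf{h}_{\mathrm e}^u)^{\mathrm H}\mathbf{P}_{\mathrm e/u}\mathbf{h}_{\mathrm e}^u\big]$ and shows the projection is either $\mathbf{0}$ or the whole vector, rather than inverting an asymptotically diagonal Gram matrix.

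There is, however, a genuine gap in your no-collision case, and it is precisely the point you flag as ``delicate'' without resolving it. You work with the continuous physical angles $\theta_{c,\ell}^u$ and invoke the fact that $\mathbf{a}_N(\psi_1)^{\mathrm H}\mathbf{a}_N(\psi_2)$ tends to $1$ when $\psi_1=\psi_2$ and to $0$ otherwise. Under the geometric model the path AoAs are continuous random variables, so they almost surely differ from \emph{every} quantized codebook angle $\hat\theta_m$; applied literally, your asymptotic fact therefore annihilates all entries of $\mathbf{H}_{\mathrm e}$, including the diagonal signal term, and the quantity you want to survive the limit (with mean $g_{n_u,m_u}$) degenerates as well. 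The paper sidesteps this with two ingredients absent from your plan: (i) a finite-$N$ lemma showing that, for the inverse-cosine-spaced codebook angles, the inner product between any two distinct quantized steering vectors equals exactly $1/N$, so the codebook matrices $\hat{\mathbf{A}}_{\mathrm{BS}}$, $\hat{\mathbf{A}}_{\mathrm{UE}}$ are asymptotically unitary; and (ii) the Sayeed virtual-channel approximation, which rewrites $\mathbf{H}^u$ on the quantized grid with bin-aggregated gains $\psi^u_{n,m}$, so orthogonality is only ever applied between grid directions, the per-bin gains survive the limit, and the $(i,w)$ entry of $\mathbf{H}_{\mathrm e}$ tends to a scaled $\psi^w_{n_w,m_i}$. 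With that representation, the asymptotic diagonality you assert (which additionally needs each selected transmit bin to couple into a single arrival bin, an implicit sparsity assumption in the paper as well) and the parallel-columns collision case follow cleanly; without it, your argument that off-diagonal entries vanish ``because the AoAs are generically distinct'' proves too much.
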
 \vspace{-0.47cm}
		
		\begin{proof}
		Based on the result in~\cite{Ngo2014}, we assume that the quantized angles $\hat{\phi}_n, n \in \llbracket 1, M_{\text{UE}} \rrbracket$ 
		and $\hat{\theta}_m, m \in \llbracket 1, M_{\text{BS}} \rrbracket$ are spaced according to the inverse cosine function. 
		The following lemma states an interesting consequence (constant inner product) of such a spacing which will be useful later.
		\begin{lemma} \label{LemOrth}
		Let the angles $\hat{\phi}_n$ and $\hat{\theta}_m$ be spaced according to the 
		inverse cosine function, as follows:
		\begin{equation} \label{Inverse_Cosine}
			\begin{aligned}
				\hat{\phi}_n &= \arccos\Big(1-\frac{2(n-1)}{M_{\textnormal{UE}}-1}\Big), 
				\quad n \in \llbracket 1,M_{\text{UE}}\rrbracket \\
				\hat{\theta}_m &= \arccos\Big(1-\frac{2(m-1)}{M_{\textnormal{BS}}-1}\Big), 
				\quad m \in \llbracket 1,M_{\text{BS}}\rrbracket,
			\end{aligned}
		\end{equation}
		then
		\begin{equation}
			\begin{aligned}
				\mathbf{a}^{\textnormal{H}}_{\textnormal{UE}}(\hat{\phi}_n)\mathbf{a}_{\textnormal{UE}}(\hat{\phi}_{\tilde{n}}) 
				&= 1/N_{\textnormal{UE}}\\
		 		\mathbf{a}^{\textnormal{H}}_{\textnormal{BS}}(\hat{\theta}_m)\mathbf{a}_{\textnormal{BS}}(\hat{\theta}_{\tilde{m}}) 
		 		&=1/N_{\textnormal{BS}}
			\end{aligned}
		\end{equation}
		for any $n \neq \tilde{n}$ and $m \neq \tilde{m}$.
		\end{lemma}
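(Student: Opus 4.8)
The plan is to reduce both identities to the evaluation of a single geometric sum, with no asymptotics needed: the statement is an exact finite‑dimensional identity. Writing the normalized half‑wavelength ULA steering vector as $\mathbf{a}_{\textnormal{UE}}(\phi) = \tfrac{1}{\sqrt{N_{\textnormal{UE}}}}\,[\,1,\ e^{j\pi\cos\phi},\ \dots,\ e^{j\pi(N_{\textnormal{UE}}-1)\cos\phi}\,]^{\textnormal{T}}$ (and analogously for $\mathbf{a}_{\textnormal{BS}}$, which is the sense in which these vectors are \emph{unitary}, i.e.\ unit‑norm), the first step is to expand
\begin{equation}
\mathbf{a}^{\textnormal{H}}_{\textnormal{UE}}(\hat{\phi}_n)\,\mathbf{a}_{\textnormal{UE}}(\hat{\phi}_{\tilde{n}}) \;=\; \frac{1}{N_{\textnormal{UE}}}\sum_{k=0}^{N_{\textnormal{UE}}-1} e^{\,j\pi k\,(\cos\hat{\phi}_{\tilde{n}}-\cos\hat{\phi}_n)} .
\end{equation}
Next I would substitute the inverse‑cosine grid \eqref{Inverse_Cosine} together with $M_{\textnormal{UE}}=N_{\textnormal{UE}}$, so that $\cos\hat{\phi}_{\tilde{n}}-\cos\hat{\phi}_n = \tfrac{2(n-\tilde{n})}{N_{\textnormal{UE}}-1}$ and the sum collapses to $\sum_{k=0}^{N_{\textnormal{UE}}-1} r^k$ with $r \triangleq e^{\,j2\pi(n-\tilde{n})/(N_{\textnormal{UE}}-1)}$.

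The crux is the observation that $r$ is an $(N_{\textnormal{UE}}-1)$‑th root of unity, since $n-\tilde{n}\in\mathbb{Z}$ gives $r^{N_{\textnormal{UE}}-1}=1$; hence $r^{N_{\textnormal{UE}}} = r\cdot r^{N_{\textnormal{UE}}-1} = r$, and for $r\neq 1$ the geometric series telescopes,
\begin{equation}
\sum_{k=0}^{N_{\textnormal{UE}}-1} r^k \;=\; \frac{r^{N_{\textnormal{UE}}}-1}{r-1} \;=\; \frac{r-1}{r-1} \;=\; 1 ,
\end{equation}
so that dividing by $N_{\textnormal{UE}}$ yields the claimed value $1/N_{\textnormal{UE}}$. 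It then remains only to check that $r\neq 1$ for every admissible pair $n\neq\tilde{n}$: since $n,\tilde{n}\in\llbracket 1,N_{\textnormal{UE}}\rrbracket$ one has $0<|n-\tilde{n}|\le N_{\textnormal{UE}}-1$, and the single boundary pair attaining equality corresponds to $\hat{\phi}_1=0$ and $\hat{\phi}_{N_{\textnormal{UE}}}=\pi$, which produce the same beam and are irrelevant to the orthogonality arguments that follow (equivalently, one may sample the grid over $[0,\pi)$).

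The identity for $\mathbf{a}_{\textnormal{BS}}$ is obtained verbatim by replacing $(N_{\textnormal{UE}},M_{\textnormal{UE}},\hat{\phi})$ with $(N_{\textnormal{BS}},M_{\textnormal{BS}},\hat{\theta})$ and using $M_{\textnormal{BS}}=N_{\textnormal{BS}}$. I do not anticipate any real obstacle: the whole argument is a direct computation, and the only point deserving a line of care is the degenerate endpoint pair just mentioned, whose exclusion is harmless.
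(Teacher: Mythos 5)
Your proof is correct and follows essentially the same route as the paper's: expand the inner product of the half-wavelength ULA steering vectors, sum the geometric series, and substitute the inverse-cosine spacing (with $M=N$) so the phase increment $\tfrac{2(n-\tilde n)}{N-1}$ makes the numerator collapse via periodicity of the complex exponential. Your explicit treatment of the degenerate pair $|n-\tilde n|=N-1$ (where $\hat\phi_1=0$ and $\hat\phi_{M}=\pi$ yield identical steering vectors, so $r=1$ and the identity fails) is in fact a point the paper's proof silently glosses over when it divides by $e^{i\pi\Delta}-1$, so that extra line of care is a welcome addition rather than a deviation.
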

		\begin{proof}[Proof of Lemma \ref{LemOrth}]
			In the following, we will consider w.l.o.g. the UE side.
			
			Let $\Delta \triangleq \cos(\hat{\phi}_n) - \cos(\hat{\phi}_{\tilde{n}})$, then we have:
			\begin{align} \label{ortho_proof_1}
				\mathbf{a}^{\textrm{H}}_{\textrm{UE}}(\hat{\phi}_n) \mathbf{a}_{\textrm{UE}}(\hat{\phi}_{\tilde{n}})
				&= \frac{1}{N_{\textrm{UE}}} \sum_{k=0}^{N_{\textrm{UE}}-1} e^{-i \pi k \Delta}\\
				& \stackrel{(a)}{=} \frac{1}{N_{\textrm{UE}}} \frac{1-e^{-i \pi N_{\textrm{UE}} \Delta}}{1-e^{-i \pi \Delta}}
			\end{align}
			where $(a)$ is due to geometric series properties. 
			
			According to the spacing in \eqref{Inverse_Cosine}, we can write $\Delta= \frac{2(\tilde{n} - n)}{N_{\textrm{UE}}-1}$. 
			Inserting this expression in \eqref{ortho_proof_1} gives:
			\begin{align}
				\mathbf{a}^{\textrm{H}}_{\textrm{UE}}(\hat{\phi}_n) \mathbf{a}_{\textrm{UE}}(\hat{\phi}_{\tilde{n}})
				&= \frac{1}{N_{\textrm{UE}}} \frac{e^{i \pi \Delta} - e^{-i \pi 2(\tilde{n}-n)}}{e^{i \pi \Delta} - 1} \\
				&\stackrel{(b)}{=} \frac{1}{N_{\textrm{UE}}}
			\end{align}
			where $(b)$ follows from $2\pi (\tilde{n}-n)=0 \Mod {2\pi}$ for $n\neq \tilde{n}$.
		\end{proof}
		
		According to Lemma \ref{LemOrth}, in the limit of large $N_{\textnormal{BS}}$ and $N_{\textnormal{UE}}$, 
		$\mathbf{a}_{\textrm{UE}}(\hat{\phi}_n) \perp \text{span}(\mathbf{a}_{\textrm{UE}}(\hat{\phi}_{\tilde{n}}) 
		~\forall \tilde{n} \neq n)$. Likewise $\mathbf{a}_{\textrm{BS}}(\hat{\theta}_m) \perp \text{span}(\mathbf{a}_{\textrm{BS}}
		(\hat{\theta}_{\tilde{m}})~\forall \tilde{m} \neq m)$.
		As a consequence, the matrices
		\begin{equation}
			\hat{\mathbf{A}}_{\text{BS}} = \begin{bmatrix}
			\mathbf{a}_{\textrm{BS}}(\hat{\theta}_1) & \dots & \mathbf{a}_{\textrm{BS}}(\hat{\theta}_{M_{\text{BS}}})
			\end{bmatrix},
		\end{equation}
		and
		\begin{equation}
			\hat{\mathbf{A}}_{\text{UE}} = \begin{bmatrix}
			\mathbf{a}_{\textrm{UE}}(\hat{\phi}_1) & \dots & \mathbf{a}_{\textrm{UE}}(\hat{\phi}_{M_{\text{UE}}})
			\end{bmatrix}
		\end{equation}
		are \emph{asymptotically unitary}. To go further, we resort to the channel approximation in~\cite{Sayeed2002}, 
		which consists in approximating the channel given in \eqref{H} using the quantized angles, as follows:
		\begin{equation}
			\mathbf{H}^u \approx \sqrt{N_{\textrm{BS}} N_{\textrm{UE}}} 
			\Big(\sum_{n=1}^{M_{\text{UE}}} \sum_{m=1}^{M_{\text{BS}}} \psi_{n, m}^{u}
			\mathbf{a}_{\textrm{BS}}(\hat{\theta}_{m})\mathbf{a}^{\textrm{H}}_{\textrm{UE}} (\hat{\phi}_{n}) \Big)
		\end{equation}
		where $\psi_{n, m}^u$ is equal to the sum of the gains of the paths whose angles lie in the \emph{virtual spatial bin}
		centered on $(\hat{\phi}_{n}, \hat{\theta}_{m})$.
		
		We rewrite now \eqref{SINRZF} using the Schur complement as follows: 
		\begin{equation} \label{SINRZF_Beams}
			\gamma^u(\mathbf{n}, \mathbf{m}) = 
			\frac{1}{\sigma_{\mathbf{\tilde{n}}}^2} \big[ (\mathbf{h}_{\text{e}}^{u})^{\mathrm{H}} 
			\mathbf{h}_{\text{e}}^{u}- (\mathbf{h}_{\text{e}}^{u})^{\textrm{H}} 
			\mathbf{P}_{\text{e}/u}  \mathbf{h}_{\text{e}}^{u} \big]
		\end{equation}
		where $\mathbf{P}_{\text{e}/u} \triangleq \mathbf{H}_{\text{e}/u} 
		(\mathbf{H}_{\text{e}/u}^{\textrm{H}}\mathbf{H}_{\text{e}/u})^{-1} \mathbf{H}_{\text{e}/u}^{\textrm{H}}$ 
		is the orthogonal projection onto the $\text{span}(\mathbf{H}_{\text{e}/u})$, with $\mathbf{H}_{\text{e}/u}$ 
		being the submatrix obtained via removing the $u$-th column from $\mathbf{H}_{\mathrm{e}}$.
		 
		Since $\hat{\mathbf{A}}_{\text{UE}}$ and $\hat{\mathbf{A}}_{\text{BS}}$ are asymptotically unitary, it holds that
		\begin{equation}
			\mathbf{P}_{\text{e}/u} \mathbf{h}_{\text{e}}^u = \begin{cases}
				\mathbf{0} & \text{if} ~m_u \neq m_w~\forall w \in \mathcal{K} \backslash \{ u \} \\
				\mathbf{h}_{\text{e}}^u & \text{if} ~\exists ~w \in \mathcal{K} \backslash \{ u \} : m_w = m_u
			\end{cases}
		\end{equation}
		and, as a consequence, equation~\eqref{SINRZF_Beams} becomes
		\begin{equation}
			\gamma^u(\mathbf{n}, \mathbf{m})  = 
			\begin{cases}
				\dfrac{\norm{\mathbf{h}_e^u}^2}{\sigma_{\mathbf{\tilde{n}}}^2}, 
				& \text{if} ~m_u \neq m_w~\forall w \in \mathcal{K} \backslash \{ u \} \\
				0 & \text{if} ~\exists ~w \in \mathcal{K} \backslash \{ u \} : m_w = m_u
			\end{cases}
		\end{equation}
		whose expected value is as \eqref{SINR_Bin_Expected}, which concludes the proof.
		\end{proof}
		\begin{remark}
			In the large-dimensional regime, the dependence of the SINR in \eqref{SINR} on the transmit beams of the other UEs
			vanishes. In particular, catastrophic co-beam interference is experienced through intersections at the BS receive beam \emph{only}. 
			We kept the dependence in \eqref{SINR_Bin_Expected} to avoid introducing additional notation. \qed
		\end{remark}
		Using Proposition~\ref{PropSINR_Bin_Expected}, the expectation in \eqref{Coo_Alg} can be approximated as follows:
		\begin{align} \label{EqSINR_Coo_K}
		\mathbb{E}_{\mathbf{n}, \mathbf{m}|\ubar{n}_u,\ubar{m}_{{1}, \dots, u}} \big[ \gamma^u(\mathbf{n}, \mathbf{m}) \big]
		\approx\sum_{\substack{(n_u, m_u) \in \mathcal{S}(\ubar{n}_u, \ubar{m}_u) \\ 
		m_u \notin \cup_{i=1}^{u-1}\mathcal{S}_{\textnormal{BS}}(\ubar{m}_{i})}} 
		\frac{g_{n_u, m_u} }{S_u \sigma_{\mathbf{\tilde{n}}}^2}.
		\end{align}
		Using \eqref{EqSINR_Coo_K} in \eqref{Coo_Alg} to choose the sub-$6$ GHz beams at the $u$-th UE allows to take 
		into account the \emph{potential} co-beam interference transferred to the lower-ranked UEs with low complexity.
		\begin{remark}
			The $K$-th (highest-ranked) UE has to consider via \eqref{EqSINR_Coo_K} the coarse-grained 
			beam decisions of all the other (lower-ranked) UEs to avoid generating potential co-beam interference. 
			Therefore, such UE might be forced to exchange high data rate for less leakage, as the best 
			non-interfering paths might have been already exploited. Therefore, it is essential to \emph{change the hierarchy} at 
			regular intervals to ensure an average acceptable rate per UE. \qed
		\end{remark}
		
		We summarize the proposed coordinated beam selection in Algorithm \ref{Coo_PseudoCode}.
		The algorithm is compatible with vectorization and parallelization, which minimize computational time.
		
		\algnewcommand\algorithmicswitch{\textbf{switch}}
		\algnewcommand\algorithmiccase{\textbf{case}}
		\algdef{SE}[SWITCH]{Switch}{EndSwitch}[1]{\algorithmicswitch\ #1\ \algorithmicdo}{\algorithmicend\ \algorithmicswitch}%
		\algdef{SE}[CASE]{Case}{EndCase}[1]{\algorithmiccase\ #1}{\algorithmicend\ \algorithmiccase}%
		\algtext*{EndSwitch}%
		\algtext*{EndCase}%
		\makeatletter \xpatchcmd{\algorithmic}{\itemsep\z@}{\itemsep=0.71ex}{}{}\makeatother
		\begin{algorithm} 
			\caption{OOB-Aided Hierarchical Coordinated Beam Selection at the generic $u$-th UE} 
			\label{Coo_PseudoCode}
			\begin{algorithmic}[1]
				\small
				\Statex INPUT: $\mathbb{E}\big[|\ubar{\mathbf{S}}^u|^2\big]$, $\ubar{m}_{1,\dots,u-1}$
				\Statex \textbf{Step 1: Exploiting OOB side-information}
					\If{$u = 1$} \Comment{The $u$-th UE is the lowest in the \emph{hierarchy}}
							\State $\mathbb{E}\big[\boldsymbol{\gamma}^u\big] = 
							\mathbb{E}\big[|\ubar{\mathbf{S}}^u|^2\big] / \sigma^2_{\mathbf{\tilde{n}}}$
							\Comment {Solve \eqref{UN_Alg} via \eqref{EqSINR_Un_K}}
					\Else \Comment{The $u$-th UE is \emph{not} the lowest in the \emph{hierarchy}}
							\For {$\ubar{n} = 1:M_{\text{UE}}$}
								\For {$\ubar{m} = 1:M_{\text{BS}}$}
									\State $N = \text{card}\big(\mathcal{S}(\ubar{n}, \ubar{m}) ~\backslash~
									\mathcal{S}_{\textnormal{BS}}(\ubar{m}_1) \cup \dots \cup
									\mathcal{S}_{\textnormal{BS}}(\ubar{m}_{u-1}) \big)$
									\State $S = \text{card}\big(\mathcal{S}(\ubar{n}, \ubar{m})\big)$
									\State $T =\mathbb{E}\big[|\ubar{\mathbf{S}}^u_{\ubar{n}, \ubar{m}}|^2 \big]
									/ \sigma^2_{\mathbf{\tilde{n}}}$
									\State $\mathbb{E}\big[\gamma^u(\ubar{n}, \ubar{m})\big] = NT/S$
									\Comment {Solve \eqref{Coo_Alg} via \eqref{EqSINR_Coo_K}}
								\EndFor
							\EndFor
					\EndIf
					\State \textbf{return} $(\ubar{n}_u^{\text{co}}, \ubar{m}_u^{\text{co}}) \leftarrow 
					\argmax_{\ubar{n}, \ubar{m}} ~\mathbb{E}\big[\boldsymbol{\gamma}^u\big]$
					\Statex \textbf{Step 2: Pilot-training the subset of mmWave beams}
					\State $(n_u^{\text{co}}, m_u^{\text{co}}) \leftarrow 
					\argmax_{n, m} \big|\mathbf{w}^u_{m} \mathbf{H}^u \mathbf{v}^u_{n} \big|^2 
					~\forall n, m \in \mathcal{S}(\ubar{n}_u^{\text{co}}, \ubar{m}_u^{\text{co}})$
			\end{algorithmic}
		\end{algorithm}
		
\section{Simulation Results} \label{sec:sims}

	We evaluate here the performance of the proposed algorithm for $K = 5$ \emph{closely-located} UEs.
	We assume $N_{\text{BS}} = 64$, $N_{\text{UE}} = 16$ for mmWave communications, and 
	$\ubar{N}_{\text{BS}} = 8$ and $\ubar{N}_{\text{UE}} = 4$ for sub-$6$ GHz ones.
	As for the carrier frequencies, we consider $28$ GHz and $3$ GHz for mmWave and sub-$6$ GHz operation, respectively.
	All the plotted data rates are the averaged -- over $10000$ Monte-Carlo iterations -- instantaneous sum-rates, obtained after ZF 
	combining at the digital stage (BS side). 
	
	\subsection{Multi-Band Channels}
	
		The performance of the proposed OOB-aided algorithms depends on the \emph{spatial congruence} between sub-$6$ GHz and mmWave channels.
		The authors in~\cite{Ali2018} proposed a simulation environment for generating sub-$6$ GHz and mmWave channels based on the model in
		\eqref{H}. The \textsc{Matlab}\textsuperscript \textregistered code used to simulate those channels is open-source and available on
		IEEEXplore~\cite{Ali2018}. We use the same model except that we consider a narrowband channel model, for which path time spread and 
		beam squint effect can be neglected~\cite{Akdeniz2014}. Note that frequency-selective filters at the BS side helps discriminating 
		(in time) among UEs which generate co-beam interference, and thus might results in giving an extra performance in average wideband 
		channels. In this paper, we consider a worst case scenario to present the substance of our idea. 
		In principle, models and algorithms could be extended to a wideband setting.
		
	\subsection{Results and Discussion}
	
		We consider a stronger (on average) LOS cluster with respect to the reflected ones, as the LOS is indeed the prominent
		propagation driver in mmWave bands~\cite{Akdeniz2014}. In particular, we adopt the following large-scale pathloss model:
		\begin{equation}
			\text{PL}(\delta) = \alpha + \beta \log_{10}(\delta) + \xi \qquad [\text{dB}]
		\end{equation}
		where $\delta$ is the path length and where the pathloss parameters $\alpha$, $\beta$ and $\xi$ are taken from 
		Table 1 in~\cite{Akdeniz2014} for both LOS and NLOS contributions. 
		The large-scale pathloss is then reflected in the cluster power $\sigma_c^2~\forall c$. 
		The average power of all the paths in a given cluster is assumed to be equal.
		Since the model in~\cite{Ali2018} is for a single-user scenario, we consider the model in~\cite{Mahmoud2002} to extend it 
		so as to generate correlated channel clusters for all the neighboring UEs in the disk. In~\cite{Mahmoud2002}, the position 
		of the clusters is made also dependent on the position of the UEs, and as a result, the possible sharing of reflectors and
		scatterers for neighboring UEs is taken into account. An example of the available sub-$6$ GHz spatial spectrum at two UEs
		is shown in Fig.~\ref{fig:ss_cm}.
		
		\begin{figure}[h]
			\centering
			\includegraphics[trim=4cm 8cm 4cm 8.5cm, scale=0.78]{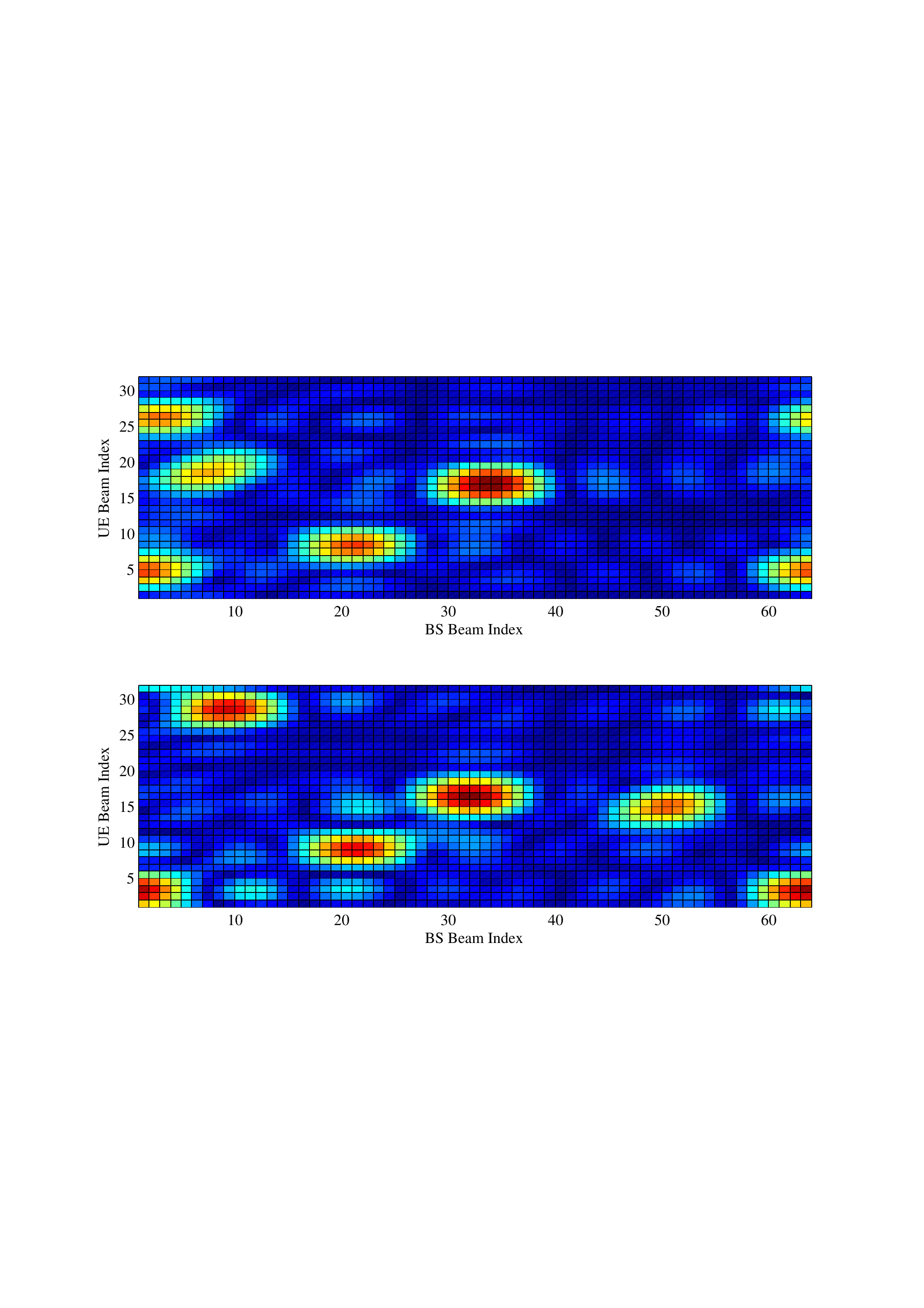}
			\caption{Example of available $\mathbb{E}[|\ubar{\mathbf{S}}^u|^2]$ at two neighboring UEs, with $r = 11$ m. 
			Some reflectors are being shared, while others are uncommon.
			The average path gains can be different.}
			\label{fig:ss_cm}
		\end{figure}
		
		In Fig.~\ref{fig:sum_rate_vs_snr}, we show the sum-rate of the proposed algorithms as a function of the SNR, where the average
		distance between the UEs is $13$ meters. For reference, we also plot the curve related to the upper bound achieved with no multi-user
		interference. The proposed OOB-aided coordinated algorithm outperforms the uncoordinated one, which neglects co-beam interference. 
		
		\begin{figure}[h]
			\centering
			\includegraphics[trim=1.57in 3.17in 1.57in 3.57in, scale=1.07]{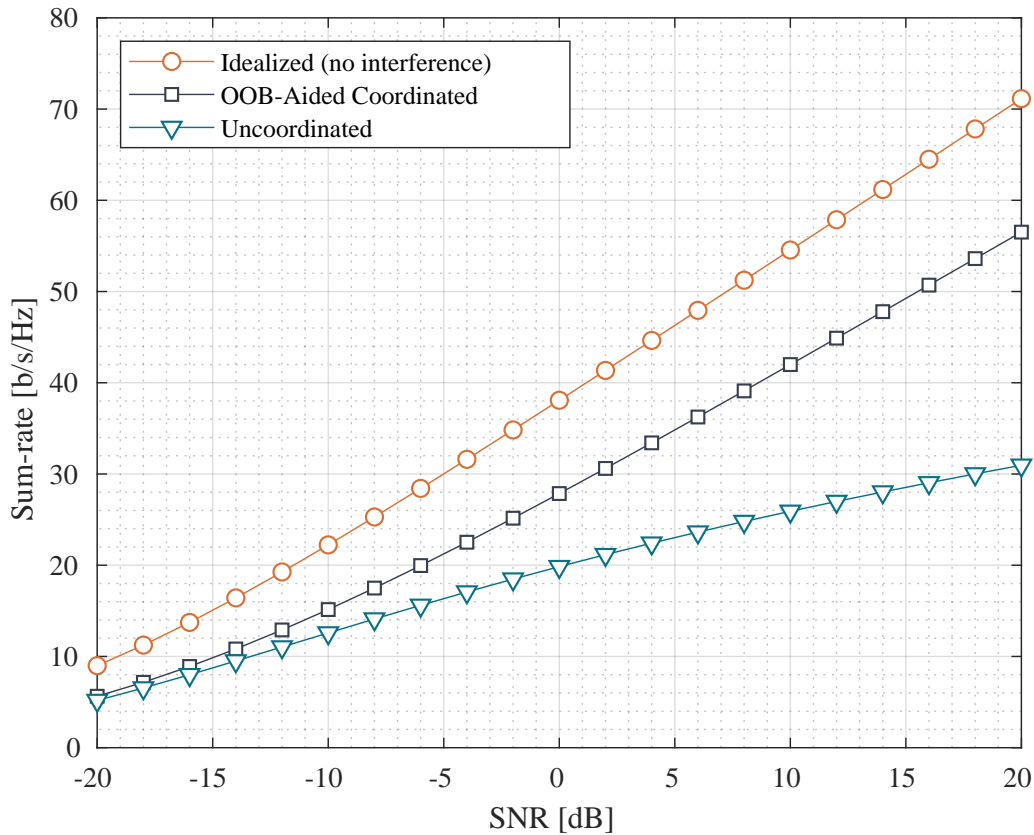}
			\caption{Sum-rate vs SNR. The average inter-UE distance is $13$ m. The OOB-aided coordinated algorithm outperforms the uncoordinated one.
			The coordination gain increases with the SNR.}
			\label{fig:sum_rate_vs_snr}
		\end{figure}
		
		In Fig.~\ref{fig:sum_rate_vs_crad}, we show the sum-rate of the proposed algorithms as a function of the average inter-UE distance, for a
		mmWave SNR of $1$ dB. 
		The coordination among the UEs allows for huge SE gains for inter-UE distances below $15$ meters. As the average inter-UE distance 
		increases -- and so, there is less chance for the co-beam interference to occur -- the performance gap between the two algorithms narrows.
	
		\begin{figure}[h]
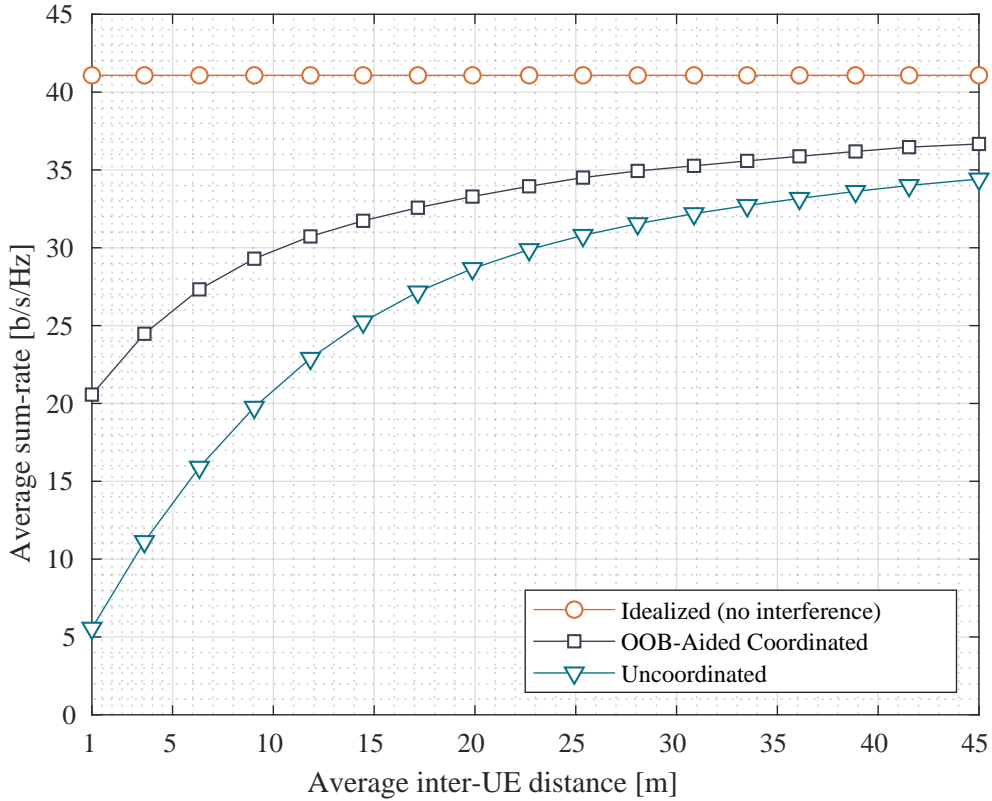

			\centering
			\begin{overpic}[trim=1.57in 3.17in 1.57in 3.57in, scale=1.027]{./Graphics/SumRate_vs_CRad_SNR4_ParamD}
				\put(-4.5,2){\textcolor{white}{\textbullet}}
				\put(-5,2){\textcolor{white}{\textbullet}}
				\put(-5.5,2){\textcolor{white}{\textbullet}}
			\end{overpic}
			\caption{Sum-rate vs average inter-UE distance. The SNR is fixed to $1$ dB.
			The performance gain achieved through coordination decreases with the inter-UE distance.}
			\label{fig:sum_rate_vs_crad}
		\end{figure}

\section{Conclusions}
	
	In mmWave communications, suitable strategies for interference minimization can be applied in the beam domain through
	e.g. exploiting spatial side-information. In this work, we introduced a low-overhead OOB-aided decentralized beam selection algorithm for a
	mmWave uplink multi-user scenario, leading to improved interference management. The core of the proposed algorithm resides in
	the D2D-enabled hierarchical information exchange, which allows for a low-overhead approach to the beam selection problem.
	Finding clear relationships between mmWave and lower bands radio environments is essential for OOB-aided approaches -- in particular,
	towards robust algorithms taking channels discrepancies into account -- and it is an interesting open research problem.
	
\section{Acknowledgment}

	The authors are supported by the ERC under the European Unions's Horizon 2020 research and innovation program 
	(Agreement no. 670896 PERFUME).
					
\bibliography{Bibl}
\bibliographystyle{IEEEtran}
				
\end{document}